\newtheorem{theorem}{Theorem}
\begin{document}
\title{Robust Indoor Localization via Conformal Methods and Variational Bayesian Adaptive Filtering}

\author{Zhiyi Zhou, Dongzhuo Liu*, Songtao Guo,~\IEEEmembership{Senior Member,~IEEE,} and Yuanyuan Yang,~\IEEEmembership{Fellow,~IEEE}
\thanks{Preprint.}
}



\maketitle
\begin{abstract}
Indoor localization is critical for IoT applications, yet challenges such as non-Gaussian noise, environmental interference, and measurement outliers hinder the robustness of traditional methods. Existing approaches, including Kalman filtering and its variants, often rely on Gaussian assumptions or static thresholds, limiting adaptability in dynamic environments. This paper proposes a hierarchical robust framework integrating Variational Bayesian (VB) parameter learning, Huber M-estimation, and Conformal Outlier Detection (COD) to address these limitations. First, VB inference jointly estimates state and noise parameters, adapting to time-varying uncertainties. Second, Huber-based robust filtering suppresses mild outliers while preserving Gaussian efficiency. Third, COD provides statistical guarantees for outlier detection via dynamically calibrated thresholds, ensuring a user-controlled false alarm rate. Theoretically, we prove the Semi-positive Definiteness of Huber-based Kalman filtering covariance and the coverage of sliding window conformal prediction. Experiments on geomagnetic fingerprint datasets demonstrate significant improvements: fingerprint matching accuracy increases from 81.25\% to 93.75\%, and positioning errors decrease from 0.62–6.87 m to 0.03–0.35 m. Comparative studies further validate the framework’s robustness, showing consistent performance gains under non-Gaussian noise and outlier conditions.
\end{abstract}

\begin{IEEEkeywords}
Indoor localization, variational Bayesian inference, conformal prediction, Huber M-estimation, geomagnetic fingerprinting.
\end{IEEEkeywords}

\section{Introduction}
\IEEEPARstart{I}{ndoor} localization plays a pivotal role in enabling a wide range of applications, from autonomous robotics and wearable devices to smart buildings within the Internet of Things (IoT) ecosystem. As a foundational technology, accurate and reliable indoor localization is essential for applications such as navigation, asset tracking, and environmental monitoring. However, the inherent challenges of indoor environments—such as severe signal degradation caused by multipath effects and dynamic interference—complicate the design of robust localization systems \cite{zhu2024enabling}, \cite{xu2019indoor}. These factors introduce substantial measurement noise, making it difficult to ensure both the accuracy and reliability of indoor positioning.

Despite significant progress, current indoor localization methods are hindered by three major limitations:
\begin{itemize}
    \item Reliance on simplifying Gaussian noise assumptions, which do not hold in many practical scenarios.
    \item Fixed robustness thresholds that lack adaptability in dynamic environments.
    \item Absence of rigorous uncertainty quantification for robust decision-making.
\end{itemize}

Traditional filtering approaches, such as the Kalman Filter (KF) and its nonlinear variants, are optimal only under restrictive assumptions of Gaussian noise and linear system dynamics \cite{urrea2021kalman}. Although the Unscented Kalman Filter (UKF) improves handling of nonlinearities through the unscented transform, it remains vulnerable to non-Gaussian noise and outliers. Adaptive methods, such as variational Bayesian (VB) filters \cite{lin2021variational} and Huber M-estimation \cite{fauss2021minimax}, partially mitigate these issues by addressing time-varying noise and suppressing moderate outliers. However, VB filters introduce high computational overhead, and Huber M-estimation's fixed threshold is not robust to dynamically changing environments. While multi-sensor fusion can improve accuracy, it often inherits distributional assumptions and model errors \cite{erhan2021smart}.

To address these limitations, we propose a novel hierarchical framework that combines variational Bayesian learning, Huber robust filtering, and Conformal Outlier Detection (COD). This framework, built on the principles of Conformal Prediction (CP), offers a robust solution to the challenges of adaptability and reliability in indoor positioning tasks. By leveraging conformal inference, our approach provides statistically rigorous confidence intervals to detect and mitigate the impact of outliers in real-time data, without making assumptions about the underlying noise distribution. This adaptability allows the framework to operate effectively in both Gaussian and non-Gaussian environments.

At the core of Conformal Prediction is the ability to construct prediction sets that reflect the uncertainty in the model's outputs. Specifically, CP quantifies non-conformity scores to identify outliers—samples that deviate significantly from the model's expected behavior. In the context of indoor localization, this capability is essential for detecting anomalies caused by noisy or anomalous signal patterns. By incorporating COD, our framework can dynamically adjust the size of the prediction set based on model confidence, offering a transparent view of the uncertainty in each prediction.

Our contributions are as follows:
\begin{itemize}
    \item \textbf{Adaptive Parameter Learning:} We propose a VB-UKF hybrid filter that simultaneously estimates state variables and noise covariances, enabling real-time adaptation to changing uncertainties.
    \item \textbf{Dual-Layer Outlier Detection:} We integrate Huber M-estimation to suppress mild outliers and COD with sliding-window calibration to provide statistical guarantees (coverage $\geq 1-\alpha$) for detecting severe anomalies.
    \item \textbf{Theoretical Rigor:} We provide proofs of semi-positive definiteness for Huber-weighted covariance matrices and finite-sample coverage bounds for non-exchangeable time series.
\end{itemize}

Experiments on geomagnetic fingerprint datasets demonstrate the effectiveness of our framework. Our method improves fingerprint matching accuracy from 81.25\% to 93.75\%, and reduces positioning errors from 0.62–6.87 m to 0.03–0.35 m. Additionally, the framework remains robust in the presence of non-Gaussian noise, achieving 95\% outlier detection precision with controlled false alarms.

The remainder of this paper is structured as follows: Section~\ref{RW} reviews related work on localization, filtering, and uncertainty quantification. Section~\ref{SD} presents the details of our proposed framework. Sections~\ref{EE}–~\ref{C} describe the experimental setup and results, followed by conclusions in Section~\ref{C}.

\section{Related Work}\label{RW}
\subsection{Indoor Localization Technologies}
Indoor positioning technologies primarily include Wi-Fi \cite{guo2022robust}, Bluetooth \cite{pau2021bluetooth}, \cite{sambu2022experimental}, geomagnetic \cite{sun2021indoor}, Inertial Measurement Units (IMU) \cite{sun2022indoor}, and Ultra-Wideband (UWB) \cite{qi2023calibration}. Wi-Fi and Bluetooth-based methods rely on fixed infrastructure, facing challenges in terms of coverage, maintenance costs, and security, which limit their widespread adoption \cite{ahmad2024recent}. UWB technology, with its advantages of low power consumption, long transmission range, and high data rate, has been widely applied in inertial measurement and attitude recognition. Geomagnetic and IMU-based methods, leveraging built-in sensors for passive positioning without external signals, exhibit high autonomy, making them promising solutions for achieving high-precision and cost-effective indoor positioning. However, while inertial navigation systems offer strong autonomy, their position and velocity errors accumulate rapidly over time. Similarly, geomagnetic sensors suffer from limited measurement accuracy and are susceptible to interference from currents and other magnetic fields, leading to measurement anomalies.

\subsection{Adaptive and Robust Filtering}
Single-sensor technologies face inherent limitations. Although multi-sensor data fusion can significantly improve positioning accuracy, it introduces modeling errors and typically assumes known noise distributions. To address these challenges, researchers have developed various adaptive and robust filtering techniques. Adaptive filters, such as fading memory filters \cite{kwon2022adaptive}, interactive multiple model filters \cite{fan2021interacting}, variational Bayesian (VB)-based filters \cite{lin2021variational}, and Gaussian sum filters \cite{chen2023gaussian}, excel in addressing modeling errors due to their strong recursive reasoning capabilities. However, these filters often perform poorly when measurement data contain outliers, as their estimates are easily contaminated, leading to error accumulation. Robust filters, such as those based on Huber’s M-estimation and federated filters \cite{wang2024improved}, are designed to handle large measurement errors and outliers.

For instance, Li et al. \cite{li2016variational} proposed an adaptive filter combining Gaussian-Newton iteration and VB approximation, capable of addressing both modeling errors and outliers in non-Gaussian scenarios. Liu et al. \cite{liu2021variational} introduced a robust variational Bayesian cubature Kalman filter (VBCKF) by incorporating the maximum correntropy criterion, which improved robustness to some extent. Davari et al. \cite{davari2021real} combined VB filters with hybrid artificial neural networks (ANN) to enable real-time outlier detection, but the high computational complexity and resource requirements limit their applicability in real-time scenarios. Yang et al. \cite{yang2024variational} proposed an adaptive robust filtering framework based on generalized maximum likelihood estimation (GM estimation) and VB methods, integrating the centered error entropy (CEE) criterion into the framework to develop the VBCEECKF algorithm. This approach enhanced robustness against outliers but required complex iterative computations, increasing deployment costs. Moreover, most existing filtering algorithms are optimized for specific noise distributions, making it challenging to maintain robustness across diverse noise settings.

\subsection{Uncertainty Quantification and Outlier Detection}
Uncertainty quantification (UQ) in localization systems has evolved significantly over the past decade. Early approaches primarily relied on probabilistic methods such as entropy estimation and Bayesian credible intervals, which derive uncertainty bounds from model outputs (e.g., softmax probabilities)~\cite{waymel2019impact}. While these methods provide intuitive uncertainty representations, their calibration often fails in practice due to overfitting or distributional mismatches, leading to overconfident or overly conservative predictions~\cite{shafer2008tutorial}.  

The emergence of conformal prediction (CP) marked a paradigm shift by offering distribution-free guarantees for uncertainty quantification. Unlike traditional methods, CP constructs prediction sets with rigorous coverage guarantees (\(1-\alpha\)) without requiring explicit likelihood models~\cite{shafer2008tutorial}. This model-agnostic framework has been widely adopted in classification tasks, where it quantifies uncertainty by evaluating the "non-conformity" of new observations against a calibration set. However, its direct application to time-series data---common in localization systems---faces inherent challenges due to violated exchangeability assumptions. Recent advances address this limitation through sliding window calibration~\cite{campos2024conformal} and sequential conformal inference~\cite{xu2023conformal}, which adaptively adjust thresholds to temporal dependencies while maintaining coverage guarantees. For instance, Xu~et~al.~\cite{xu2023conformal} demonstrated theoretically grounded prediction intervals for non-stationary time series, achieving robust performance in both simulated and real-world navigation tasks.  

In parallel, CP has been tailored to address outlier detection in dynamic environments. Strawn~et~al.~\cite{strawn2023conformal} integrated CP with reinforcement learning to design safety filters for collision avoidance, reducing collision rates by 80\% in robotic navigation. Similarly, Yang~et~al.~\cite{yang2023safe} leveraged CP to quantify perception uncertainty in LiDAR-based autonomous systems, achieving 93\% safety rates under stochastic sensor noise. These works highlight CP's versatility in bridging uncertainty quantification with real-time decision-making.  

Beyond robotics, CP has spurred innovations in sensor fusion and optimization. Garcia~et~al.~\cite{garcia2024multi} proposed multi-view conformal models, showing superior coverage performance over single-sensor baselines in heterogeneous environments. Meanwhile, Kim~et~al.~\cite{kim2025robust} addressed model misspecification in Bayesian optimization via localized conformal calibration, outperforming traditional Gaussian process-based methods in engineering design tasks.  

Despite these advancements, critical gaps persist in adapting CP to resource-constrained IoT localization systems. Existing studies often assume centralized computation or static environments, overlooking the challenges of real-time calibration in dynamic indoor settings with limited sensor bandwidth. Furthermore, while methods like sliding window CP~\cite{campos2024conformal} relax i.i.d. assumptions, their theoretical guarantees for non-linear state-space models---central to geomagnetic localization---remain underdeveloped.  

\section{System Design}\label{SD}
\subsection{Geomagnetic characteristics and data preprocessing}
Geomagnetic matching (MFM), as a passive positioning technology, utilizes the data of the Earth's magnetic field intensity and combines fingerprint matching for positioning. In this section, we collect geomagnetic data from different indoor locations to establish a geomagnetic fingerprint database, which is the main task in the offline phase. Then, we analyze its characteristics and determine the feasibility of using this technology for indoor positioning.

Liu et al. \cite{liu2018geomagnetism} pointed out that under the same path, the geomagnetic fingerprint data has a relatively high matching accuracy, that is, the collected geomagnetic signal time series has a relatively high similarity. Therefore, the path matching problem can be transformed into a time series matching problem. By collecting the geomagnetic intensity data of the same path from May to August 2024, we observed that the geomagnetic characteristics of the collected data remained consistent over a long period of time, even though the user heights and walking speeds were different, as shown in Figure~\ref{fig1}.

\begin{figure}[!h]
\centering
\includegraphics[width=0.5\textwidth]{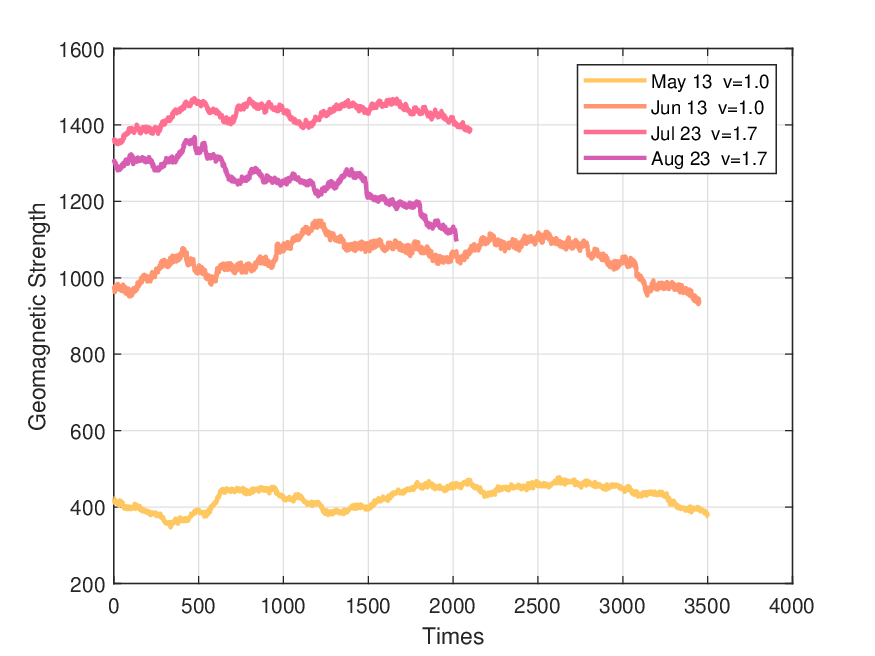}
\caption{Geomagnetic strength characteristics.}
\label{fig1}
\end{figure}

It is important to note that the geomagnetic data collection using geomagnetic sensors is performed in a carrier coordinate system (CCS), where the \(x\)-, \(y\)-, and \(z\)-axes correspond to the directions of the device's right side, front, and top, respectively. Therefore, analyzing the data collected in the carrier coordinate system is meaningless. It is necessary to convert the three-axis geomagnetic data into a global coordinate system (GCS). Let the magnetic field strength vector in the sensor's coordinate system be denoted as \(\mathbf{M}_{CCS}\). To transform it into the global coordinate system \(\mathbf{M}_{GCS}\), we apply three Euler angle-based rotation matrices, corresponding to rotations around the \(x\)-, \(y\)-, and \(z\)-axes:

\begin{equation}
R_{x}(\theta_{roll}) = \begin{bmatrix}
1 & 0 & 0 \\
0 & \cos(\theta_{roll}) & -\sin(\theta_{roll}) \\
0 & \sin(\theta_{roll}) & \cos(\theta_{roll})
\end{bmatrix}
\end{equation}

\begin{equation}
R_{y}(\theta_{pitch}) = \begin{bmatrix}
\cos(\theta_{pitch}) & 0 & \sin(\theta_{pitch}) \\
0 & 1 & 0 \\
-\sin(\theta_{pitch}) & 0 & \cos(\theta_{pitch})
\end{bmatrix}
\end{equation}

\begin{equation}
R_{z}(\theta_{yaw}) = \begin{bmatrix}
\cos(\theta_{yaw}) & -\sin(\theta_{yaw}) & 0 \\
\sin(\theta_{yaw}) & \cos(\theta_{yaw}) & 0 \\
0 & 0 & 1
\end{bmatrix}
\end{equation}

Besides, geomagnetic fingerprint data exhibits strong location dependence and is susceptible to fluctuations due to local environmental changes. In contrast, inertial sensor data tends to be stable over short periods but accumulates errors over time. By fusing these two data types, we can effectively leverage the global characteristics of geomagnetic fingerprints and the local characteristics of inertial data. This fusion significantly reduces the errors caused by inertial drift, achieving high-accuracy location estimation.

\subsection{Variational Bayesian Adaptive Filtering Framework}
In the online phase, the system continuously updates its state estimate through real-time processing of Received Signal Strength Indicator (RSSI) fingerprints, combining variational Bayesian learning with unscented transform to achieve robust nonlinear filtering. We first formalize the discrete-time state-space model:
\begin{equation}
\mathbf{x}_k = f(\mathbf{x}_{k-1}) + \mathbf{w}_{k-1}, \quad \mathbf{w}_{k-1} \sim \mathcal{N}(0,\mathbf{Q}_{k-1})
\end{equation}
\begin{equation}
\mathbf{z}_k = h(\mathbf{x}_k) + \mathbf{v}_k, \quad \mathbf{v}_k \sim \mathcal{N}(0,\mathbf{R}_k)
\end{equation}
where \(f(\cdot)\) and \(h(\cdot)\) are nonlinear state transition and observation functions. Assuming both the process noise \(\mathbf{w}_k\) and the measurement noise \(\mathbf{v}_k\) are zero-mean Gaussian variables simplifies the construction of the model. However, in practical scenarios, the measurement noise covariance matrix is often unknown or time-varying, which can significantly affect the accuracy of the estimates and reduce the system's robustness to outliers. As a result, Kalman filtering and its variants, which are effective for solving simplified models, may no longer provide accurate solutions in such cases.

To address these challenges, this paper focuses on adaptive filtering techniques that combine the Variational Bayesian (VB) approximation with the Adaptive Unscented Kalman Filter (AUKF), a method that has been demonstrated to enhance adaptability and robustness in \cite{li2016variational}. The specific steps of this framework are as follows:

1) \textbf{Sigma Point Generation}: For state vector \(\mathbf{x} \in \mathbb{R}^n\), generate \(2n+1\) sigma points:
\begin{equation}\label{1}
\begin{cases}
\xi_{0,k-1} = \hat{\mathbf{x}}_{k-1} \\
\xi_{i,k-1} = \hat{\mathbf{x}}_{k-1} + \left(\sqrt{(n+\kappa)\mathbf{P}_{k-1}}\right)_i, \quad i=1,\dots,n \\
\xi_{i,k-1} = \hat{\mathbf{x}}_{k-1} - \left(\sqrt{(n+\kappa)\mathbf{P}_{k-1}}\right)_{i-n}, \quad i=n+1,\dots,2n
\end{cases}
\end{equation}
where \(\kappa\) controls sigma point spread.

2) \textbf{State Prediction:}: Transform sigma points via \(f(\cdot)\):
\begin{equation}
\xi_{i,k} = f(\xi_{i,k-1}), \quad i=0,\dots,2n
\end{equation}

The predicted mean and covariance:
\begin{equation}\label{2}
\hat{\mathbf{x}}_{k|k-1} = \sum_{i=0}^{2n} \omega_i^{(m)} \xi_{i,k}, 
\end{equation}

\begin{equation}\label{3}
\quad \mathbf{P}_{k|k-1} = \sum_{i=0}^{2n} \omega_i^{(c)} (\xi_{i,k}-\hat{\mathbf{x}}_{k|k-1})(\cdot)^T + \mathbf{Q}_{k-1}
\end{equation}
with weights \(\omega_i^{(m)}\), \(\omega_i^{(c)}\) determined by \(\kappa\).

3) \textbf{Variational Parameter Prediction}: For measurement noise covariance \(\mathbf{R}_k \sim \mathcal{IW}(v_k, V_k)\) (inverse-Wishart):
\begin{equation}\label{4}
v_{k|k-1} = \rho(v_{k-1} - n - 1) + n + 1, 
\end{equation}

\begin{equation}\label{5}
\quad V_{k|k-1} = \sqrt{\rho} V_{k-1} \sqrt{\rho}^T
\end{equation}
where \(\rho \in (0,1]\) governs temporal correlation.

The variational Bayesian framework aims to approximate the true posterior$p(\mathbf{x}_k, \mathbf{R}_k | \mathbf{z}_{1:k})$ by maximizing the Kullback-Leibler (KL) divergence between the variational distribution$q(\mathbf{x}_k,\mathbf{R}_k)$ and the true posterior. This is equivalent to maximizing the Evidence Lower Bound (ELBO):
\begin{equation}
\begin{split}
    \text{ELBO} = \mathbb{E}_q \left[ \log p(\mathbf{z}_k | \mathbf{x}_k, \mathbf{R}_k) \right] + \mathbb{E}_q \left[ \log p(\mathbf{x}_k | \mathbf{x}_{k-1}) \right] - \\
    \text{KL} \left( q(\mathbf{R}_k) \| p(\mathbf{R}_k) \right) - \text{KL} \left( q(\mathbf{x}_k) \| p(\mathbf{x}_k | \mathbf{x}_{k-1}) \right),
\end{split}
\end{equation}
where $p(\mathbf{R}_k)$ is the inverse-Wishart prior and $q(\mathbf{R}_k)$ is the variational posterior.

The inverse-Wishart distribution is chosen as the conjugate prior for the measurement noise covariance $\mathbf{R}_k$, ensuring closed-form updates during variational inference. This avoids complex numerical integration and maintains computational efficiency.

4) \textbf{Measurement Update}:
\begin{equation}
\zeta_{i,k}^{(j)} = h(\xi_{i,k}^{(j)})
\end{equation}

\begin{equation}
\quad \hat{\mathbf{z}}_{k}^{(j)} = \sum_{i=0}^{2n} \omega_i^{(m)} \zeta_{i,k}^{(j)}
\end{equation}

5) \textbf{Covariance Matrices}:
\begin{align}
\mathbf{P}_{zz}^{(j)} &= \sum_{i=0}^{2n} \omega_i^{(c)} \left( \zeta_{i,k}^{(j)} - \hat{\mathbf{z}}_{k}^{(j)} \right) \left( \zeta_{i,k}^{(j)} - \hat{\mathbf{z}}_{k}^{(j)} \right)^T + \frac{V_k^{(j)}}{v_k - n - 1}, \\
\mathbf{P}_{xz}^{(j)} &= \sum_{i=0}^{2n} \omega_i^{(c)} \left( \xi_{i,k}^{(j)} - \hat{\mathbf{x}}_{k}^{(j)} \right) \left( \zeta_{i,k}^{(j)} - \hat{\mathbf{z}}_{k}^{(j)} \right)^T.
\end{align}
The term $\frac{V_k^{(j)}}{v_k - n - 1}$ represents the expectation of the measurement noise covariance $\mathbf{R}_k$ under the inverse-Wishart variational posterior $q(\mathbf{R}_k)$, ensuring adaptive noise estimation.

6) \textbf{State \& Covariance Update}:
\begin{equation}
\mathbf{K}_k^{(j)} = \mathbf{P}_{xz}^{(j)} (\mathbf{P}_{zz}^{(j)})^{-1}
\end{equation}

\begin{equation}
\quad \hat{\mathbf{x}}_k^{(j+1)} = \hat{\mathbf{x}}_k^{(j)} + \mathbf{K}_k^{(j)} (\mathbf{z}_k - \hat{\mathbf{z}}_k^{(j)})
\end{equation}

7) \textbf{Noise Parameter Update}:
\begin{equation}
V_k^{(j+1)} = V_k^{(j)} + \sum_{i=0}^{2n} \omega_i^{(c)} (\mathbf{z}_k - \zeta_{i,k}^{(j)})(\cdot)^T
\end{equation}

Iterate steps 4)-7) for \(N\) cycles to approximate the joint posterior \(p(\mathbf{x}_k, \mathbf{R}_k|\mathbf{z}_{1:k})\).

\begin{algorithm}[t]
\caption{Variational Bayesian AUKF}
\begin{algorithmic}[1]
\REQUIRE $\hat{\mathbf{x}}_{k-1}$, $\mathbf{P}_{k-1}$, $\mathbf{z}_k$, $f(\cdot)$, $h(\cdot)$, $\mathbf{Q}_{k-1}$, $\mathbf{R}_{k-1}$, $v_{k-1}$, $V_{k-1}$
\STATE Generate sigma points $\xi_{i,k-1}$ via \eqref{1}
\STATE Predict $\hat{\mathbf{x}}_{k|k-1}$ and $\mathbf{P}_{k|k-1}$ using \eqref{2}-\eqref{3}
\STATE Predict $v_{k|k-1}$ and $V_{k|k-1}$ via \eqref{4}-\eqref{5}
\FOR{$j=1$ to $N$}
    \STATE Compute measurement sigma points $\zeta_{i,k}^{(j)}$
    \STATE Update covariance matrices $\mathbf{P}_{zz}^{(j)}$, $\mathbf{P}_{xz}^{(j)}$
    \STATE Calculate Kalman gain $\mathbf{K}_k^{(j)}$
    \STATE Update state estimate $\hat{\mathbf{x}}_k^{(j+1)}$
    \STATE Update noise parameter $V_k^{(j+1)}$
\ENDFOR
\ENSURE $\hat{\mathbf{x}}_k^{(N)}$, $\mathbf{P}_k^{(N)}$, $V_k^{(N)}$
\end{algorithmic}
\end{algorithm}

The framework provides three fundamental advantages:
\begin{itemize}
\item \textbf{Adaptive Noise Estimation}: Joint state and noise covariance estimation through variational inference
\item \textbf{Nonlinear Handling}: Unscented transform maintains second-order accuracy without linearization
\item \textbf{Computational Efficiency}: Fixed iteration count ensures real-time operation
\end{itemize}

\subsection{Hierarchical Outlier Handling Mechanism}

\subsubsection{Huber Robust Update}
In M-estimation, the predictor update is often framed as a weighted least squares problem. By constructing a nonlinear regression measurement model and minimizing the generalized Huber loss function, the corrected measurement values can be derived. 

To mitigate the impact of mild outliers on state estimation, a Huber loss function is adopted to construct the weighting matrix. The piecewise weighting function is defined as:
\begin{equation}
W(r_k) = 
\begin{cases} 
1, & |r_k| \leq \delta \\
\delta / |r_k|, & |r_k| > \delta 
\end{cases}
\label{eq:huber_weight}
\end{equation}
where \( r_k = z_k - H\hat{x}_{k|k-1} \) is the measurement residual, and \( \delta \) denotes the robustness threshold. A recommended value is \( \delta = 1.345\sigma \), where \( \sigma \) represents the standard deviation of the observation noise.

The Huber weighting is incorporated into the standard Kalman update, resulting in the modified Kalman gain:

\begin{equation}
K_k = P_{k|k-1}H^T \left( H P_{k|k-1}H^T + R/W(r_k) \right)^{-1}
\label{eq:modified_gain}
\end{equation}

The term \( R/W(r_k) \) adaptively downweights anomalous observations: when \( |r_k| > \delta \), \( W(r_k) < 1 \), thereby increasing the effective measurement noise covariance.

\begin{theorem}[Huber-Weighted Posterior Covariance]
\label{thm:huber_cov}
Given the Huber-modified posterior covariance matrix:
\begin{equation}
P_{k|k} = (I - W_k K_k H)P_{k|k-1}(I - W_k K_k H)^T + W_k K_k R K_k^T
\end{equation}
where:
\begin{itemize}
\item $W_k \geq 0$ is the Huber weighting factor
\item $P_{k|k-1} \succ 0$ is the prior covariance
\item $R \succ 0$ is the measurement noise covariance
\end{itemize}
Then $P_{k|k}$ remains positive semi-definite, i.e., $P_{k|k} \succeq 0$.
\end{theorem}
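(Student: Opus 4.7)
The plan is to observe that the given expression for $P_{k|k}$ is essentially a Joseph-form update, and that it decomposes naturally into the sum of two matrices that we can each show to be positive semi-definite. Since the sum of PSD matrices is PSD, this yields the result. This strategy is preferable to working with the simplified form $P_{k|k} = (I - W_k K_k H)P_{k|k-1}$ that arises from the standard Kalman update, because the Joseph form is numerically well known to preserve PSD-ness even under small perturbations, and it makes the symmetry manifest.

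First I would fix an arbitrary test vector $v \in \mathbb{R}^n$ and compute $v^T P_{k|k} v$ by splitting it as
\begin{equation}
v^T P_{k|k} v = v^T (I - W_k K_k H)P_{k|k-1}(I - W_k K_k H)^T v + v^T W_k K_k R K_k^T v.
\end{equation}
For the first summand, I would set $u = (I - W_k K_k H)^T v$ and rewrite the quadratic form as $u^T P_{k|k-1} u$, which is nonnegative because $P_{k|k-1} \succ 0$ by hypothesis; note that this congruence argument does not require $(I - W_k K_k H)$ to be invertible, so it applies even in degenerate cases such as $W_k = 0$ or $H$ rank-deficient.

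For the second summand, I would let $w = K_k^T v$ and rewrite it as $W_k\, w^T R\, w$. Since $R \succ 0$, we have $w^T R w \geq 0$, and since the Huber weight $W_k \in (0,1]$ by construction in \eqref{eq:huber_weight}, the product is nonnegative; in fact $W_k \geq 0$ is all that is needed. Combining both estimates gives $v^T P_{k|k} v \geq 0$ for every $v$, establishing $P_{k|k} \succeq 0$. I would also briefly remark on symmetry: both summands are manifestly symmetric by the $M(\cdot)M^T$ structure, so $P_{k|k}$ is symmetric as well, confirming it is a legitimate covariance matrix.

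The proof itself contains no real obstacle, since the Joseph form was deliberately chosen to make positive semi-definiteness transparent. The only subtle point I would flag is the interpretation of $W_k$: the weighting function $W(r_k)$ in \eqref{eq:huber_weight} is scalar-valued, so the factorization $W_k\, w^T R w$ is unambiguous; if one instead interprets $W_k$ componentwise as a diagonal matrix for vector-valued measurements, the second-term argument must be repeated with $W_k^{1/2} K_k R K_k^T W_k^{1/2}$ (valid because $W_k \succeq 0$ is diagonal and thus has a real square root that commutes with scaling), and the first term is unaffected. Either interpretation yields the same PSD conclusion, so the theorem is robust to this modeling choice.
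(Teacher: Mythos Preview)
Your proposal is correct and follows essentially the same route as the paper: decompose $P_{k|k}$ into the two Joseph-form summands $T_1=(I-W_kK_kH)P_{k|k-1}(I-W_kK_kH)^T$ and $T_2=W_kK_kRK_k^T$, argue each is PSD via the congruence $A(\cdot)A^T$ structure together with $P_{k|k-1}\succ0$, $R\succ0$, $W_k\ge0$, and conclude by summing. Your additional remarks on symmetry and on the scalar-versus-diagonal interpretation of $W_k$ go slightly beyond what the paper spells out but are consistent with it.
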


Details on the proof process are provided in Appendix A.

\begin{algorithm}[htbp]
\caption{Huber Robust Update Procedure}
\begin{algorithmic}[1]
\REQUIRE Predicted state \(\hat{x}_{k|k-1}\), predicted covariance \(P_{k|k-1}\), measurement \(z_k\)
\STATE Compute residual \( r_k \leftarrow z_k - H\hat{x}_{k|k-1} \)
\STATE Calculate weight \(W(r_t)\) via (\ref{eq:huber_weight})
\STATE Compute modified gain \(K_t\) via (\ref{eq:modified_gain})
\STATE Update state \(\hat{x}_{k|k} \leftarrow \hat{x}_{k|k-1} + W(r_t)K_t r_t \)
\STATE Update covariance \(P_{k|k} \leftarrow (I - W(r_t)K_t H)P_{k|k-1} + W(r_t)K_t R K_t^T \)
\ENSURE Posterior state \(\hat{x}_{k|k}\), covariance \(P_{k|k}\)
\end{algorithmic}
\end{algorithm}

\subsubsection{Conformal Outlier Detection}
While the Huber function provides robustness against moderate outliers, its dependence on a static threshold limits adaptability to dynamic noise environments. To address this, we introduce a conformal outlier detection mechanism as a secondary defense layer, offering statistical guarantees for error control.

This section begins with a formulation of the indoor positioning problem, followed by an overview of the fundamental CP framework. Finally, we incorporate conformal outlier detection to enhance the robustness of the filter.

\paragraph{Problem Formulation}
The fingerprint-based indoor localization is first modeled as a classification task. Given an offline dataset of $n$ i.i.d. RSSI-location pairs $\{(X_i, Y_i)\}_{i=1}^n$, let $\hat{f}: \mathcal{X} \to \Delta^K$ be a probabilistic predictor that outputs a distribution over $K$ candidate locations, where $\Delta^K$ denotes the probability simplex. For each test instance $X_{\text{test}}$, we quantify its conformity to the training distribution through a non-conformity score.

\paragraph{Conformal Prediction Framework}
Define $s: \mathcal{X} \times \mathcal{Y} \to \mathbb{R}$ as the non-conformity score:
\begin{equation}
    s(X_i, Y_i) = 1 - \hat{f}(X_i)_{Y_i}
\end{equation}
where $\hat{f}(X_i)_{Y_i}$ denotes the predicted probability for the true class $Y_i$.

Compute the $\lceil (n+1)(1-\alpha) \rceil$-th quantile as conformal threshold:
\begin{equation}
    \hat{q} = \inf\left\{q : \frac{|\{i : s_i \leq q\}|}{n} \geq \frac{\lceil (n+1)(1-\alpha) \rceil}{n}\right\}
\end{equation}

Then we can build the prediction set:
\begin{equation}
    C(X_{\text{test}}) = \left\{y \in \{1,\ldots,K\} : s(X_{\text{test}}, y) \leq \hat{q}\right\}
\end{equation}

This construction guarantees marginal coverage:
\begin{equation}
    \mathbb{P}\left(Y_{\text{test}} \in C(X_{\text{test}})\right) \geq 1 - \alpha
\end{equation}

\paragraph{Extension to Filtering Framework} 
For sequential localization tasks, we adapt the conformal framework to handle time-series data:

First we maintain a sliding window of $w$ recent samples $\mathcal{C}_k = \{s_{k-w}, \ldots, s_{k-1}\}$ to build a dynamic calibration set.

Then define the non-conformity score as Filter-Aware Scoring using normalized innovation:
\begin{equation}
    s_k = \|\mathbf{z}_k - H\hat{\mathbf{x}}_{k|k-1}\|_{\mathbf{S}_k^{-1}}
\end{equation}
\begin{equation}
    \quad \mathbf{S}_k = H\mathbf{P}_{k|k-1}H^\top + R_k
\end{equation}

Finally, the adaptive thresholding:
\begin{equation}
    \hat{q}_k = \text{Quantile}\left(\mathcal{C}_k; \frac{\lceil (w+1)(1-\alpha) \rceil}{w}\right)
\end{equation}

\paragraph{Outlier Handling Mechanism}
When $s_k > \hat{q}_k$, indicating statistical inconsistency:
\begin{equation}
    R_k \leftarrow \gamma R_k \quad (\gamma > 1)
\end{equation}
This covariance inflation strategy temporarily reduces outlier influence while maintaining filter stability.

\begin{theorem}[Time-Varying Coverage]
Under exchangeability within the sliding window, the adaptive threshold $\hat{q}_k$ satisfies:
\begin{equation}
    \mathbb{P}(s_k \leq \hat{q}_k) \geq 1 - \alpha - \mathcal{O}\left(\frac{1}{\sqrt{w}}\right)
\end{equation}
\end{theorem}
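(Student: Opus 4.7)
The plan is to combine the classical split conformal argument (which already gives finite-sample coverage under exchangeability) with a Dvoretzky--Kiefer--Wolfowitz (DKW) concentration bound that quantifies the empirical-quantile slack incurred by using only the $w$ window scores rather than the full augmented sample of size $w{+}1$. I fix notation: let $F_k$ denote the conditional CDF of $s_k$ given the filter history through time $k-1$, let $\hat F_w$ denote the empirical CDF of the calibration scores $\mathcal{C}_k=\{s_{k-w},\dots,s_{k-1}\}$, and write $q^{\star}_{1-\alpha}$ for the population $(1-\alpha)$-quantile of $F_k$ and $\hat q_k=\hat F_w^{-1}\!\bigl(\lceil(w+1)(1-\alpha)\rceil/w\bigr)$ for the adaptive threshold actually used.

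First I would invoke the standard split-CP rank argument: under the stated exchangeability of the augmented sample $\{s_{k-w},\dots,s_{k-1},s_k\}$, the rank of $s_k$ is uniform on $\{1,\dots,w+1\}$. Consequently, for the \emph{oracle} threshold $\hat q_k^{\star}$ equal to the $\lceil(w+1)(1-\alpha)\rceil$-th order statistic of the augmented sample, one has $\mathbb{P}(s_k\le\hat q_k^{\star})\ge 1-\alpha$, with a finite-sample overshoot of order $1/(w+1)$. This step is essentially a bookkeeping exercise once the exchangeability hypothesis is granted.

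Next I would bridge $\hat q_k^{\star}$ and $\hat q_k$. By DKW, $\sup_t|\hat F_w(t)-F_k(t)|\le\sqrt{\log(2/\delta)/(2w)}$ with probability at least $1-\delta$, so on this high-probability event $|\hat F_w(\hat q_k)-F_k(\hat q_k)|=\mathcal{O}(1/\sqrt{w})$, and likewise for $\hat q_k^{\star}$. Under the mild regularity assumption that $F_k$ admits a density bounded below by some $c>0$ in a neighborhood of $q^{\star}_{1-\alpha}$, a standard Bahadur-type inversion converts CDF deviations into quantile deviations and gives $F_k(\hat q_k)\ge 1-\alpha-\mathcal{O}(1/\sqrt{w})$. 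Taking expectations through $\mathbb{P}(s_k\le\hat q_k)=\mathbb{E}[F_k(\hat q_k)]$, and absorbing the failure probability $\delta$ into the rate by choosing $\delta=1/w$, delivers the claimed bound.

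The main obstacle is the last inversion step: DKW controls $\hat F_w$ uniformly at rate $1/\sqrt{w}$, but propagating this to a coverage statement on $\hat q_k$ requires that $F_k$ not be too flat at the relevant quantile, and in a filtering context $F_k$ is state-dependent (it implicitly conditions on $\mathbf{P}_{k|k-1}$ and $R_k$). A clean resolution is to impose a uniform lower density bound over the relevant state range, which lets one treat the density constant as a fixed factor inside the $\mathcal{O}(1/\sqrt{w})$ term; without such a regularity hypothesis only a weaker rate is obtainable. The nominal $\mathcal{O}(1/(w+1))$ correction from pure split CP is therefore dominated by this empirical-quantile fluctuation, which is precisely the origin of the advertised $\mathcal{O}(1/\sqrt{w})$ in the theorem.
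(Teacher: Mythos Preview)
Your route differs from the paper's. You attribute the $\mathcal{O}(1/\sqrt{w})$ slack to \emph{quantile-estimation error}: after the split-CP rank argument for an augmented-sample oracle quantile $\hat q_k^{\star}$, you invoke DKW and a Bahadur inversion (hence a density lower bound on $F_k$) to pass to the calibration-only threshold $\hat q_k$. The paper instead attributes the slack to \emph{departure from exchangeability}: it works under a weaker block-exchangeability/mixing assumption, forms the sequence $D_k=\mathbb{I}\{s_k\le\hat q_k\}-(1-\alpha)$, and applies an Azuma--Hoeffding inequality for weakly dependent data (citing Barber et al.), with the constant $C$ absorbing the mixing coefficient. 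No density regularity is invoked there.

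There is also a redundancy in your bridging step. Under the literal hypothesis of strict exchangeability of $\{s_{k-w},\dots,s_k\}$, the events $\{s_k\le\hat q_k\}$ and $\{s_k\le\hat q_k^{\star}\}$ coincide almost surely: both are equivalent to the rank of $s_k$ in the augmented sample being at most $\lceil(w{+}1)(1{-}\alpha)\rceil$. The classical rank argument therefore already yields $\mathbb{P}(s_k\le\hat q_k)\ge 1-\alpha$ directly, with no DKW step and no density assumption. Under strict exchangeability the $\mathcal{O}(1/\sqrt{w})$ term is pure slack; its intended role in the paper is to absorb the weak-dependence correction, which your DKW argument does not address. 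Your proposal does establish the stated inequality, but via an unnecessary detour and with an extraneous regularity hypothesis, while missing the mechanism the paper actually targets.
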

Proof appears in Appendix B.

\paragraph{Effectiveness}
\begin{figure}[H]
\centering
\includegraphics[width=0.5\textwidth]{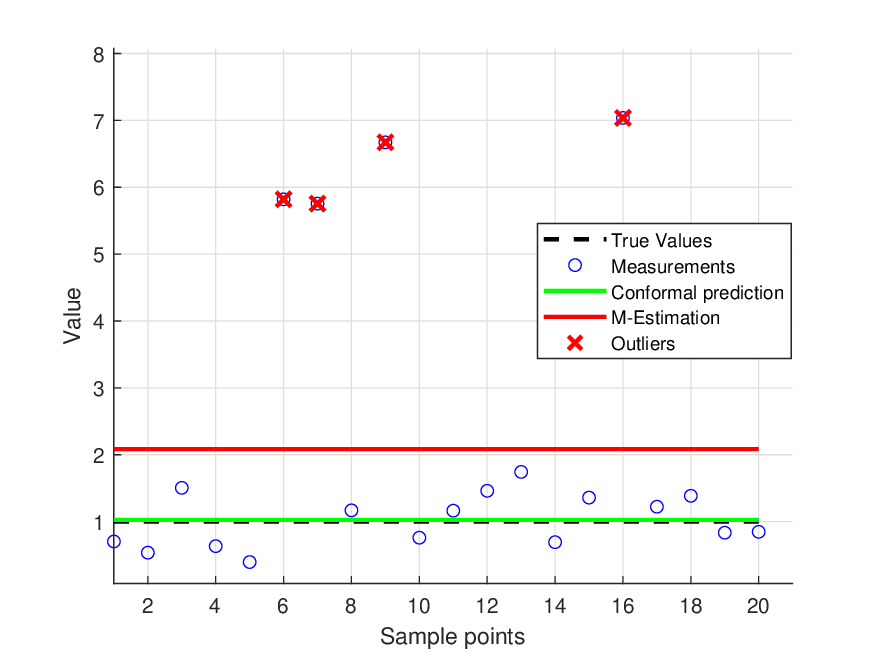}
\caption{Conformal outlier detection results with $\alpha=0.05$.}
\label{fig2}
\end{figure}
To evaluate the effectiveness of this approach, consider a scenario where we have \(n = 20\) test samples and need to determine whether a new test point is an outlier. First, we normalize the calculation results of non-conformity score(NS) and sort the NSs in ascending order. The closer the score is to 1, the higher the probability that the test point is an outlier. The objective is to identify as many outliers as possible while ensuring that no more than \(\alpha = 0.05\) of the non-outliers are incorrectly classified as outliers. Figure~\ref{fig2} illustrates the results of this procedure using an outlier detection model. 

\section{Experiments and Evaluation}\label{EE}
\subsection{Numerical Simulation}
To effectively evaluate the accuracy and robustness of the designed filter, this study selects a classical nonlinear numerical case, namely the Univariate Nonstationary Growth Model (UNGM), to demonstrate the filter's performance. The discrete-time state-space equation of the UNGM can be expressed as:
\begin{equation}
x_{k} = 0.5x_{k - 1} + \frac{25x_{k - 1}}{1 + x_{k - 1}^{2}} + 8\cos(1.2(k - 1)) + v_{k - 1}
\end{equation}
\begin{equation}
y_{k} = \frac{x_{k}^{2}}{20} + r_{k}, \quad k = 1, 2, \dots, M
\end{equation}
Here, \(v_{k - 1}\) represents a Gaussian process noise with zero mean and covariance \(Q_{k - 1}\). \(r_{k}\) denotes the measurement noise. The index \(k\) represents the simulation step, and \(M\) is the total number of simulation steps. One term in the equation is independent of \(x_{k}\), but varies with time \(k\), which can be interpreted as time-varying noise.
\begin{figure}[h]
\centering
\includegraphics[width=0.45\textwidth]{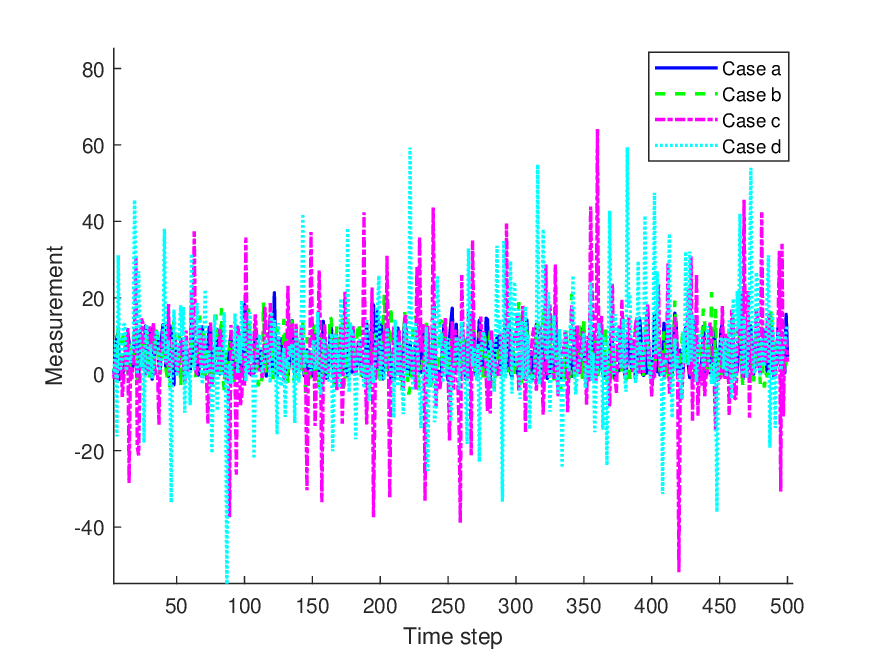}
\caption{Comparison of measurements with different noise.}
\label{fig3}
\end{figure}
To demonstrate the performance of the filter, simulations are conducted under the following four different types of noise scenarios:
\begin{enumerate}
    \item Case a: Measurement noise is known and follows a Gaussian distribution, \(r_{k} \sim \mathcal{N}(0, R_{k})\), where \(k = 1\).
    \item Case b: The covariance of the measurement noise is time-varying and follows a Gaussian distribution, \(r_{k}^{c} \sim \mathcal{N}(0, R_{k}^{c})\), with \(R_{k}^{c}\) given by:
        \begin{equation}
        R_{k}^{c} = \begin{cases}
        3 + 2(2 + \arctan(0.3(k - M/4))), & k < 1.5 \times M/4 \\
        3 + 2(2 + \arctan(-0.3(k - M/2))), & k > 1.5 \times M/4
        \end{cases} 
        \end{equation}
    \item Case c: Measurement noise follows a mixture of Gaussian istributions:
        \begin{equation}
        r_{k}^{c} \sim (1 - a)\mathcal{N}(r_{1,k} | 0, R_{1}^{2}) + a \mathcal{N}(r_{2,k} | 0, R_{2}^{2})
        \end{equation}
    where $a$ is a disturbance factor representing the extent of contamination in the error model, and $R_{1}$ and $R_{2}$ are the standard deviations of the Gaussian distributions. The first Gaussian distribution is a standard Gaussian, while the second distribution is considered a perturbation distribution. The larger the value of \(a\), the more significant the disturbance.
    \item Case d: This case combines both b and c, with measurement noise following a Gaussian mixture distribution \(r_{k}^{c}\), where the variance is time-varying \(R_{1}^{c} = R_{k}^{c}\):
        \begin{equation}
        r_{k}^{c} \sim (1 - a)\mathcal{N}(r_{1,k}^{c} | 0, R_{k}^{2}) + a \mathcal{N}(r_{2,k} | 0, R_{2}^{2})
        \end{equation}
\end{enumerate}

The measurement values are shown in Figure~\ref{fig3}.
And the simulation results of VB-based filter and CP-based filer for four different measurement noises are shown in Figure~\ref{fig4}. It can be observed that CP method can effectively improve the filter effect when the noise is case a and case b, while for c and d, although the improvement effect is not significant, it further inhibits the influence of outliers on the basis of the original robust algorithm.

We selected two classical filters, namely the Particle Filter (PF) and Unscented Kalman Filter (UKF), a robust Huber-based UKF (HUKF), and two adaptive robust filters for the simulation. The results are presented in Figure~\ref{fig5}. 
\begin{figure}[h]
\centering
\subfloat[]{%
    \includegraphics[width=0.5\linewidth]{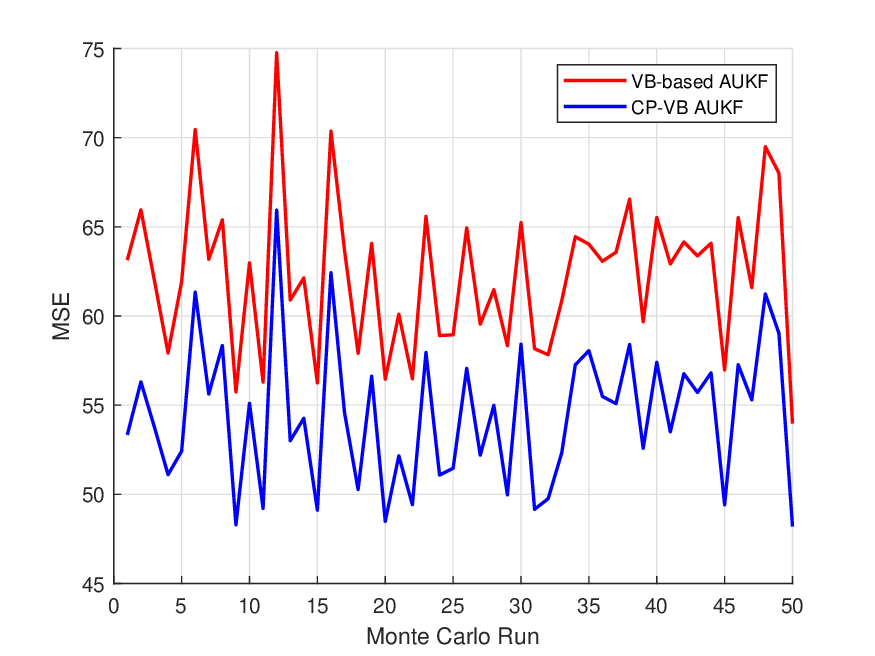}%
    \label{fig4a}%
} 
\subfloat[]{%
    \includegraphics[width=0.5\linewidth]{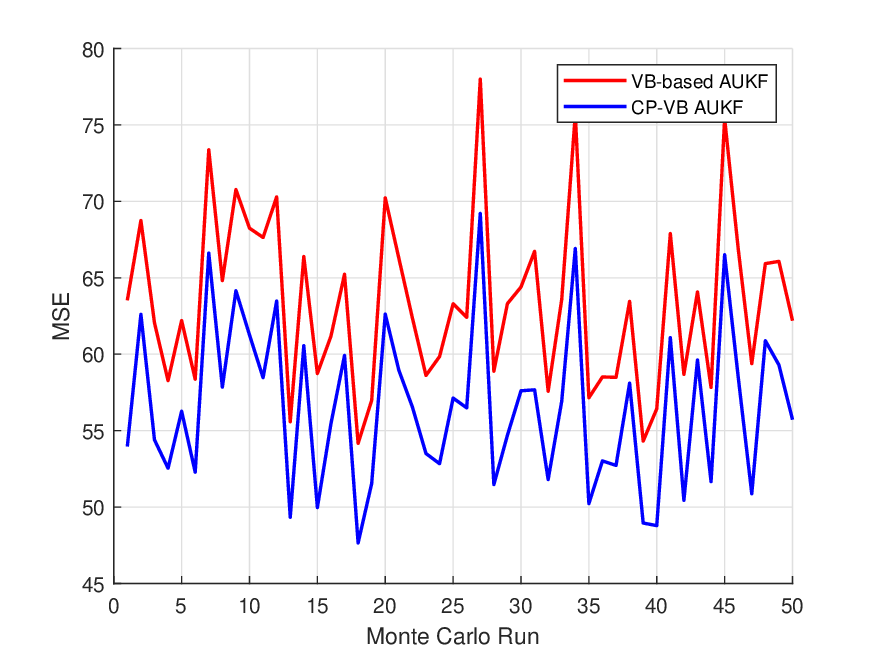}%
    \label{fig4b}%
} \hfill
\subfloat[]{%
    \includegraphics[width=0.5\linewidth]{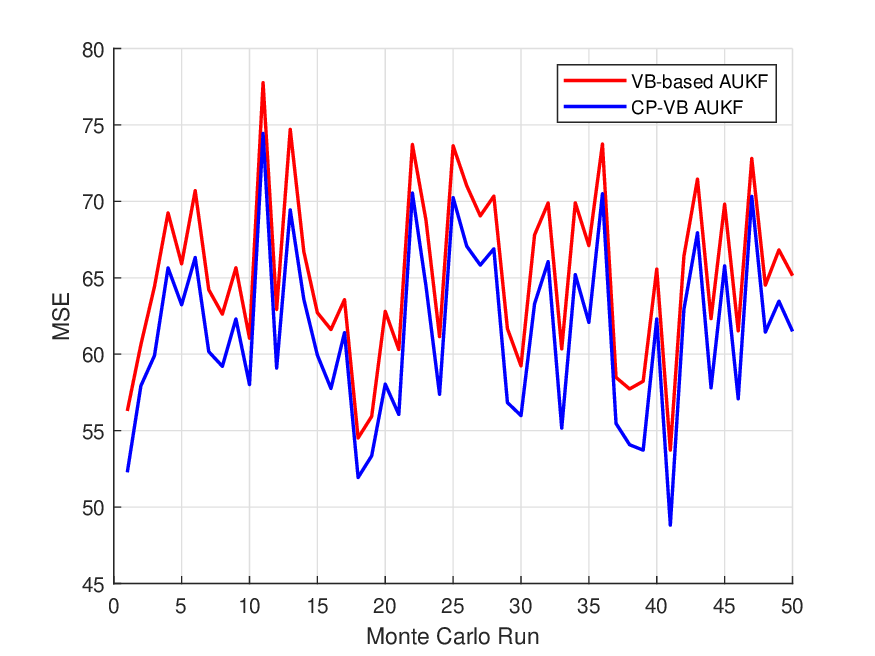}%
    \label{fig4c}%
}
\subfloat[]{%
    \includegraphics[width=0.5\linewidth]{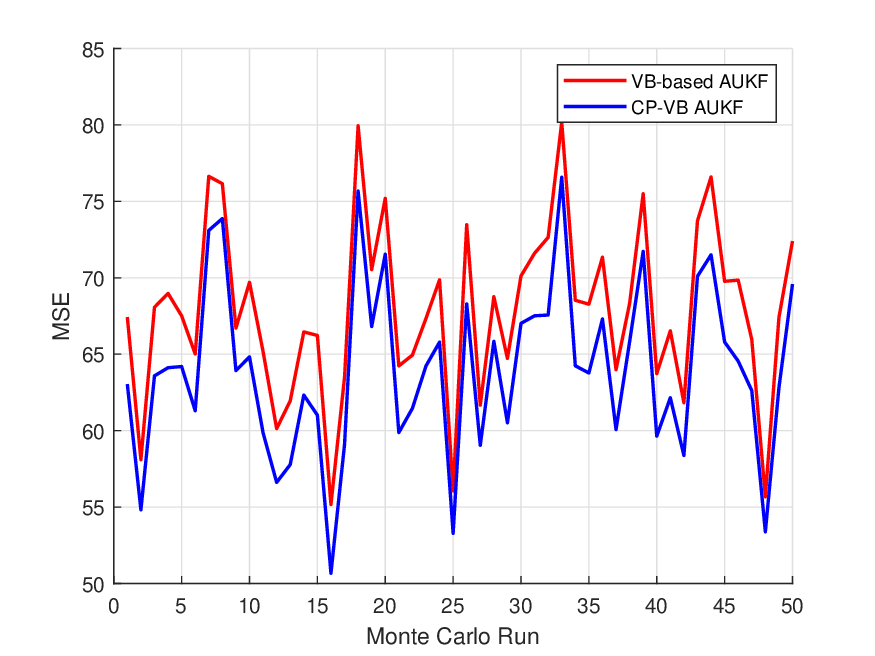}%
    \label{fig4d}%
}
\caption{The simulation results of VB-based AUKF filters for the four different types of measurement noise. (a) Case a. (b) Case b. (c) Case c. (d) Case d.}
\label{fig4}
\end{figure}
It is evident that as the impact of outliers increases, the advantage of adaptive robust filters becomes more pronounced. After applying the conformal outlier detection method proposed in this paper, all filters show improved performance. Interestingly, even the classical filters, which typically perform worse than adaptive robust filters, are able to effectively mitigate the influence of measurement noise with contaminated distributions.

To quantitatively validate the robustness improvements, Table~\ref{tb1} compares the Mean Squared Error (MSE) of five filters under four noise scenarios, both with and without conformal outlier detection (We mark it simply as CP in the figures, and in the following text as well) enhancement. Three key observations emerge:

\begin{itemize}
    \item \textbf{Universal Performance Gain}: CP integration reduces MSE across all filters and noise types. The most significant improvement occurs in Case~c (non-Gaussian mixture), where UKF's MSE decreases by 53.2\% (200.70$\rightarrow$93.89), demonstrating CP's effectiveness against heavy-tailed distributions.
    
    \item \textbf{Classical vs. Adaptive Filters}: While adaptive filters (VB-AUKF/VB-HAUKF) achieve lower baseline MSE, classical filters show greater relative improvement with CP. For instance, PF achieves 21.3\% MSE reduction in Case~d versus 1.3\% for VB-HAUKF, suggesting CP compensates for classical methods' weaker inherent robustness.
    
    \item \textbf{Noise-Type Dependency}: CP provides maximum benefit in time-varying non-Gaussian scenarios (Cases~c/d), reducing average MSE by 46.7\% compared to 21.9\% in stationary Gaussian cases (Cases~a/b). This aligns with the visual trends of Figure~\ref{fig4} and Figure~\ref{fig5}.
\end{itemize}

Notably, the proposed VB-HAUKF+CP combination achieves the lowest absolute MSE in all scenarios (49.16$\sim$98.88), outperforming second-best VB-AUKF+CP by 6.1\%$\sim$9.4\%. This demonstrates the synergistic effect of variational Bayesian adaptation and CP-based outlier rejection.

\begin{figure}[h]
\centering
\subfloat[]{%
    \includegraphics[width=0.5\linewidth]{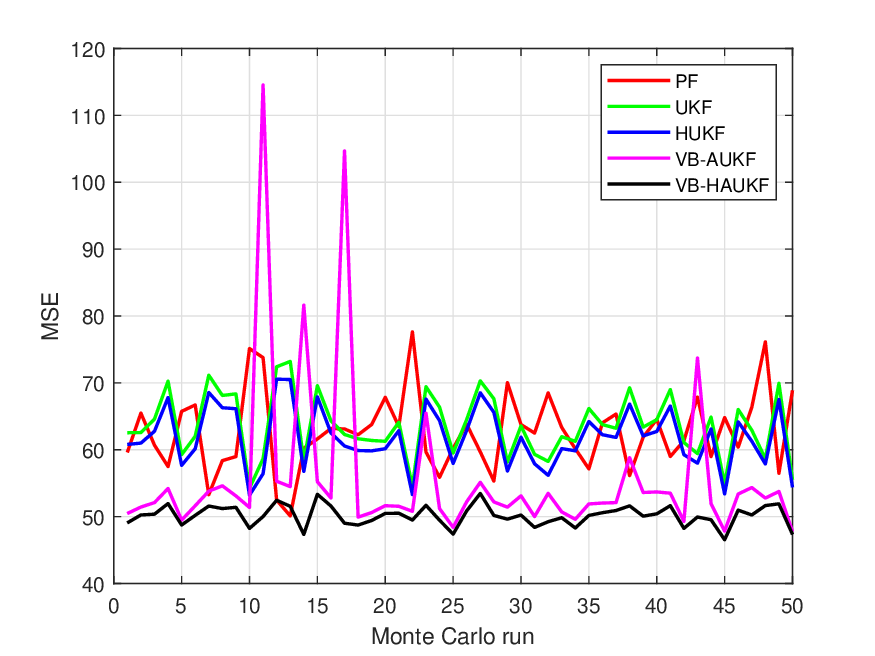}%
    \label{fig5a}%
} 
\subfloat[]{%
    \includegraphics[width=0.5\linewidth]{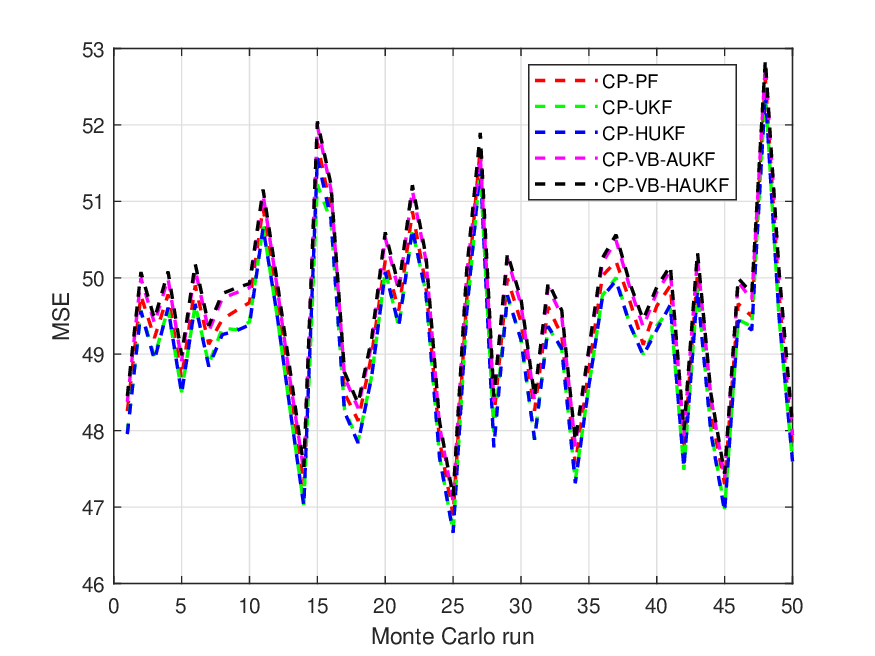}%
    \label{fig5b}%
} \hfill
\subfloat[]{%
    \includegraphics[width=0.5\linewidth]{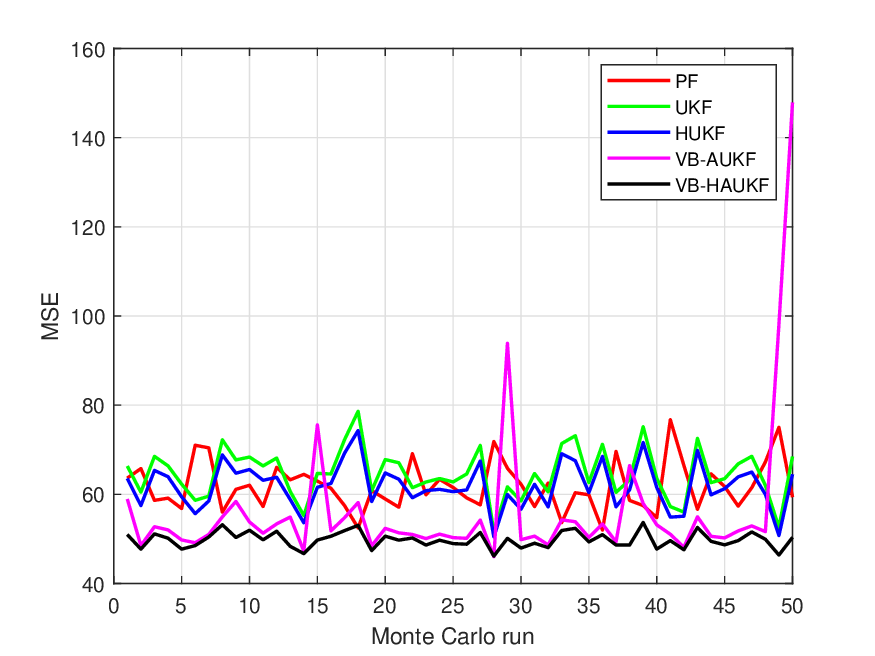}%
    \label{fig5c}%
}
\subfloat[]{%
    \includegraphics[width=0.5\linewidth]{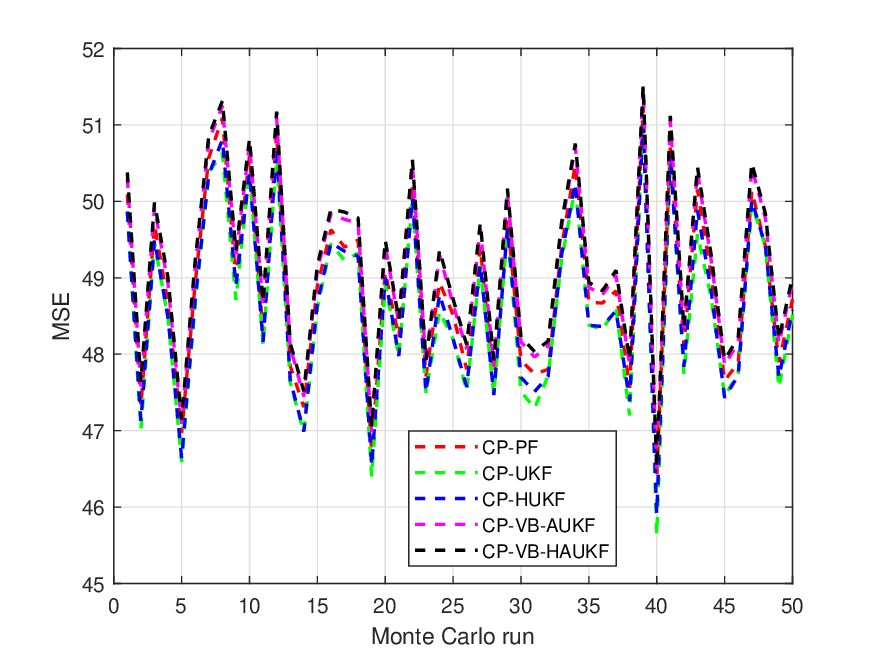}%
    \label{fig5d}%
} \hfill
\subfloat[]{%
    \includegraphics[width=0.5\linewidth]{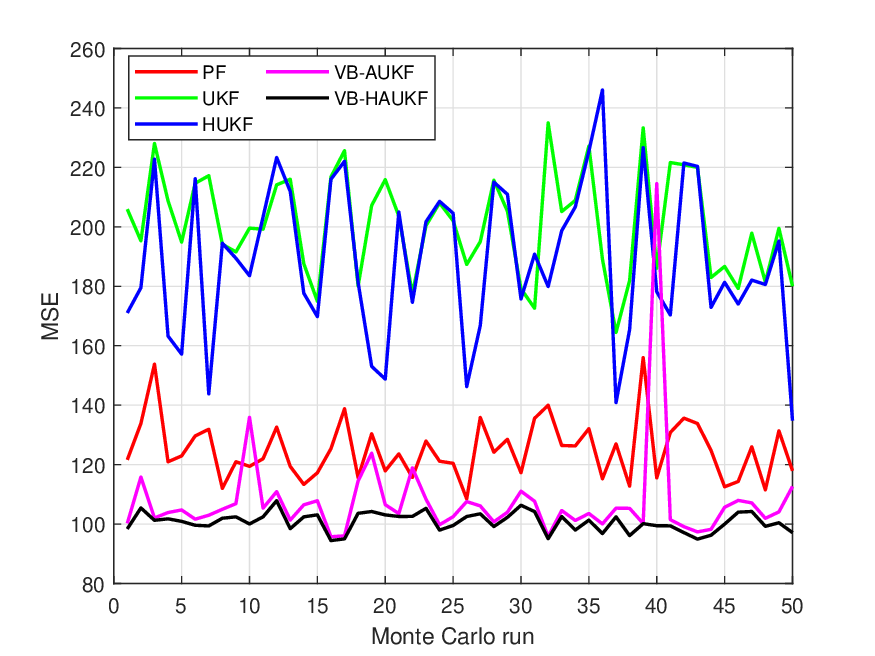}%
    \label{fig5e}%
} 
\subfloat[]{%
    \includegraphics[width=0.5\linewidth]{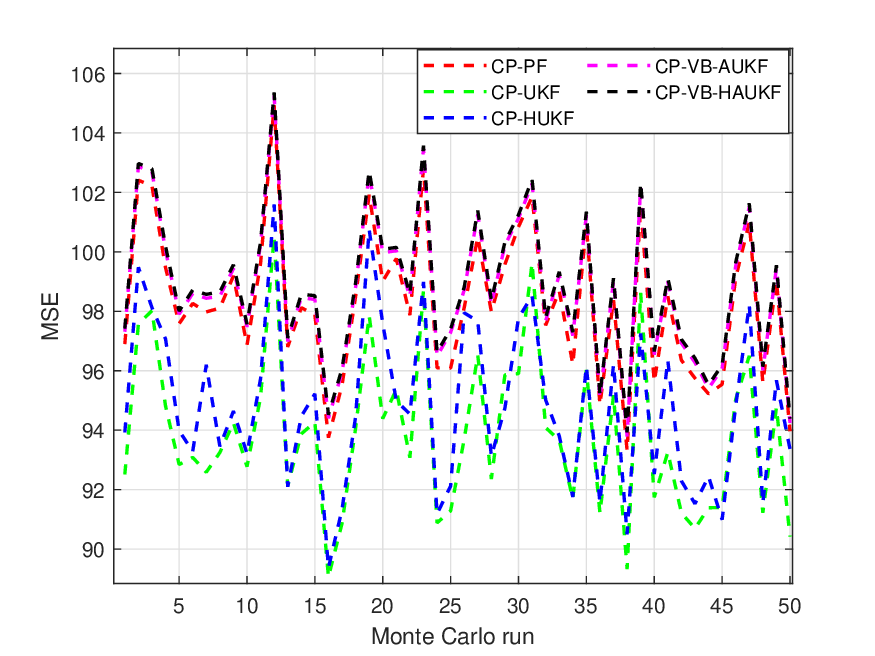}%
    \label{fig5f}%
} \hfill
\subfloat[]{%
    \includegraphics[width=0.5\linewidth]{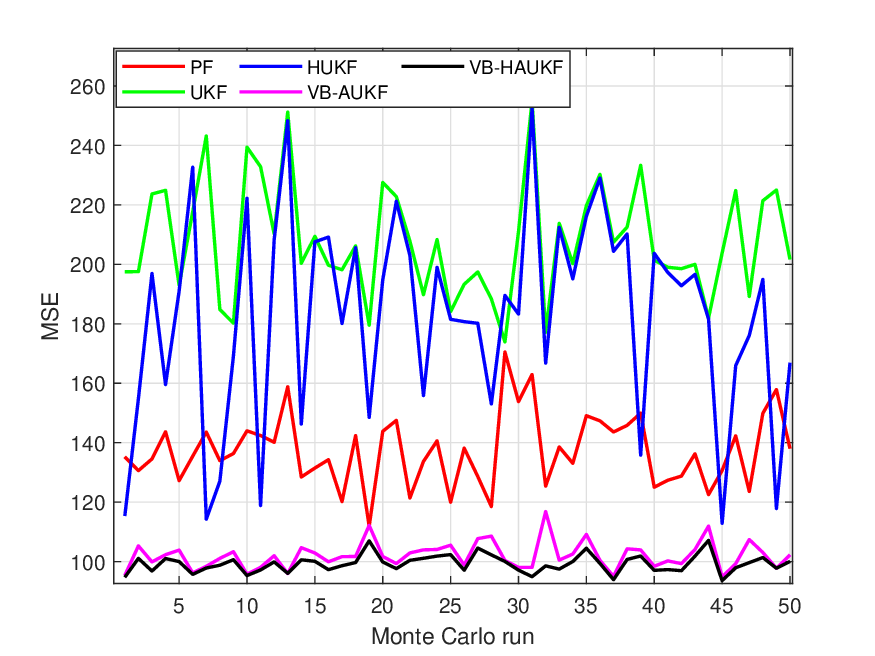}%
    \label{fig5g}%
}
\subfloat[]{%
    \includegraphics[width=0.5\linewidth]{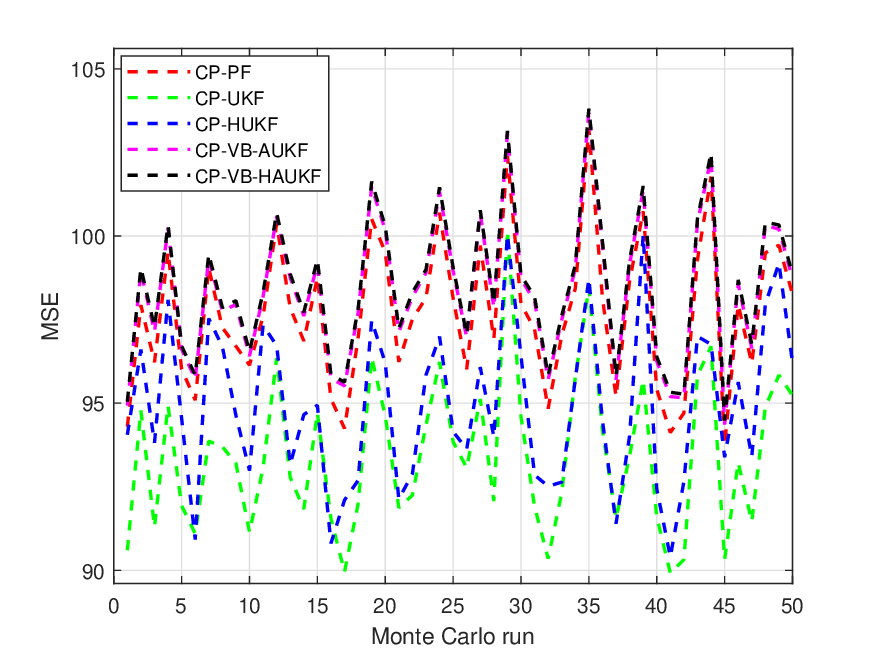}%
    \label{fig5h}%
}
\caption{The simulation results of different filters for the four different types of measurement noise. (a) Case a. (b) Case b. (c) Case c. (d) Case d.}
\label{fig5}
\end{figure}

\begin{table}[!t]
    \centering
    \caption{The mean MSE of the five filters}
    \resizebox{\linewidth}{!}{\begin{tabular}{ccccccccc} 
    \toprule
    \multirow{3}{*}{filters} & \multicolumn{8}{c}{Mean-MSE}   \\
    \cline{2-9}  
    & \multicolumn{2}{c}{Case a} & \multicolumn{2}{c}{Case b} & \multicolumn{2}{c}{Case c} & \multicolumn{2}{c}{Case d} \\
    \cline{2-9}  
    & original & CP-based & original & CP-based & original & CP-based & original & CP-based  \\
    \midrule
    PF & 62.66 & 49.35 & 61.81 & 48.86 & 125.00 & 98.27 & 137.40 & 97.68 \\
    UKF & 63.55 & 49.11 & 64.62 & 48.60 & 200.70 & 93.89 & 207.8 & 93.45 \\
    HUKF & 61.81 & 49.11 & 61.91 & 48.66 & 188.60 & 94.79 & 182.5 & 94.91 \\
    VB-AUKF & 55.77 & 49.55 & 56.41 & 49.10 & 107.70 & 98.77 & 102 & 98.39 \\
    VB-HAUKF & 50.15 & 49.62 & 49.79 & 49.16 & 100.70 & 98.88 & 99.28 & 98.51 \\
    \bottomrule
\end{tabular}}
\label{tb1}
\end{table}

To visually illustrate the effectiveness of the filters, Figure~\ref{fig6} shows the cumulative distribution function (CDF) curves of the four filters under different conditions, which reveal three critical insights about filter performance:

\begin{itemize}
    \item \textbf{Tail Behavior Characterization}: 
    In heavy-tailed scenarios (Cases~c/d), the proposed CP-based combination exhibits the steepest CDF ascent within the low-error region (e.g., $<100$ in Fig.~\ref{fig6c}), with 90\% of errors below 95 versus 107 for vanilla UKF. This indicates effective suppression of outlier-induced large errors. The CP-enhanced curves (dashed lines) show compressed right tails compared to original filters (solid lines), particularly visible in 95th percentile range.

    \item \textbf{Consistency Across Noise Types}:
    While adaptive filters (VB-AUKF/VB-HAUKF) maintain tight error distributions in Gaussian cases (Cases~a/b), CP brings classical filters (UKF) closer to their performance. For instance, in Case~a, UKF+CP reaches 95\% CDF at 51 error—even higher slightly than VB-HAUKF+CP, despite UKF's baseline being 71 worse.

    \item \textbf{Algorithmic Robustness Trade-off}:
    The HUKF's CDF curves demonstrate median error robustness but falter in extreme cases—its 95th percentile error in Case~c reaches more than 220 versus 100 for VB-HAUKF+CP. This quantifies the trade-off between heuristic robustness (HUKF) and learning-based adaptation (VB-HAUKF+CP).
\end{itemize}

These observations align with the MSE metrics in Table~\ref{tb1} while providing new dimensionality: the CDFs quantify not just average performance but \textit{error distribution shaping}—critical for safety-sensitive applications where bounding worst-case errors is paramount.

\begin{figure}[h]
\centering
\subfloat[]{%
    \includegraphics[width=0.5\linewidth]{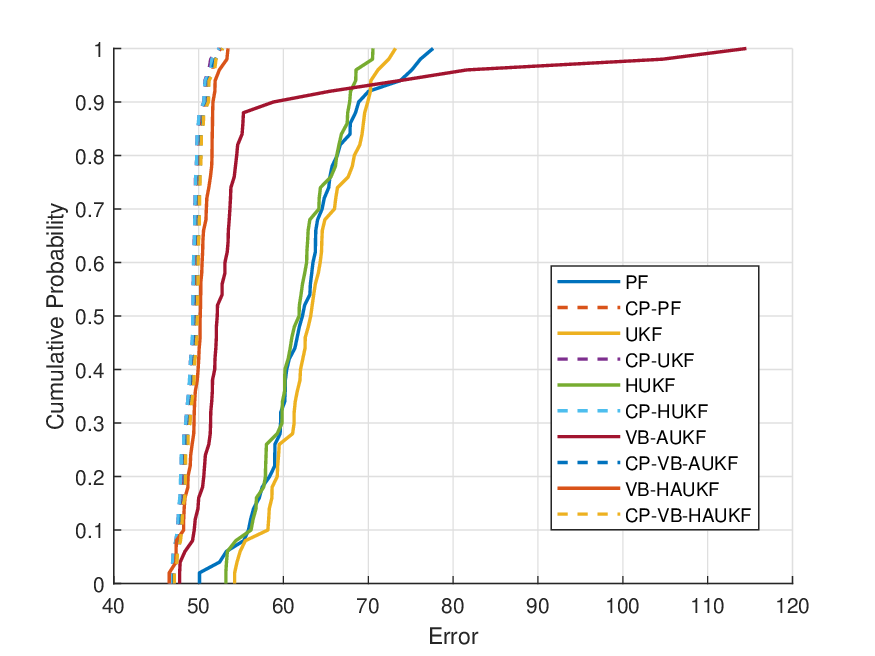}%
    \label{fig6a}%
} 
\subfloat[]{%
    \includegraphics[width=0.5\linewidth]{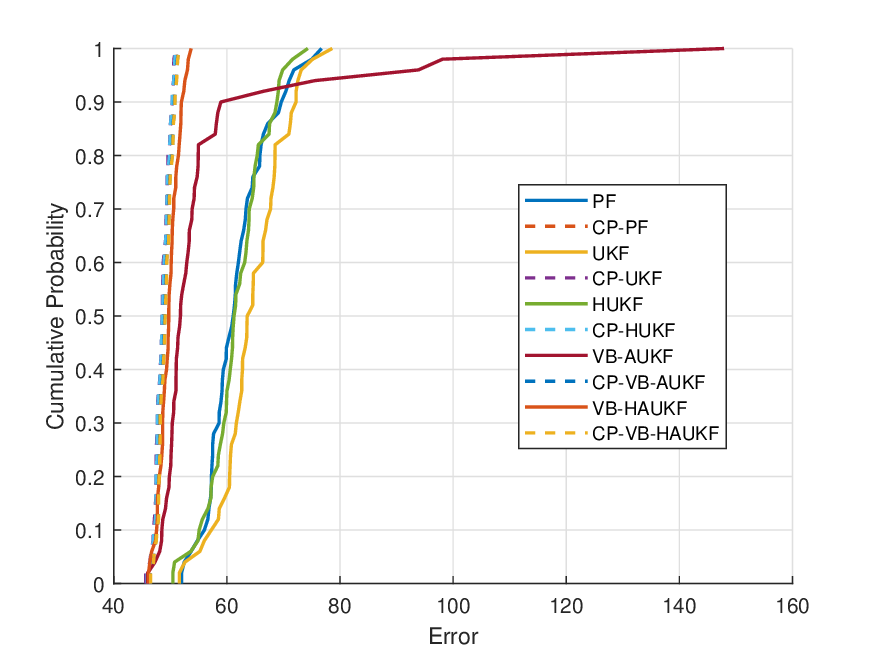}%
    \label{fig6b}%
} \hfill
\subfloat[]{%
    \includegraphics[width=0.5\linewidth]{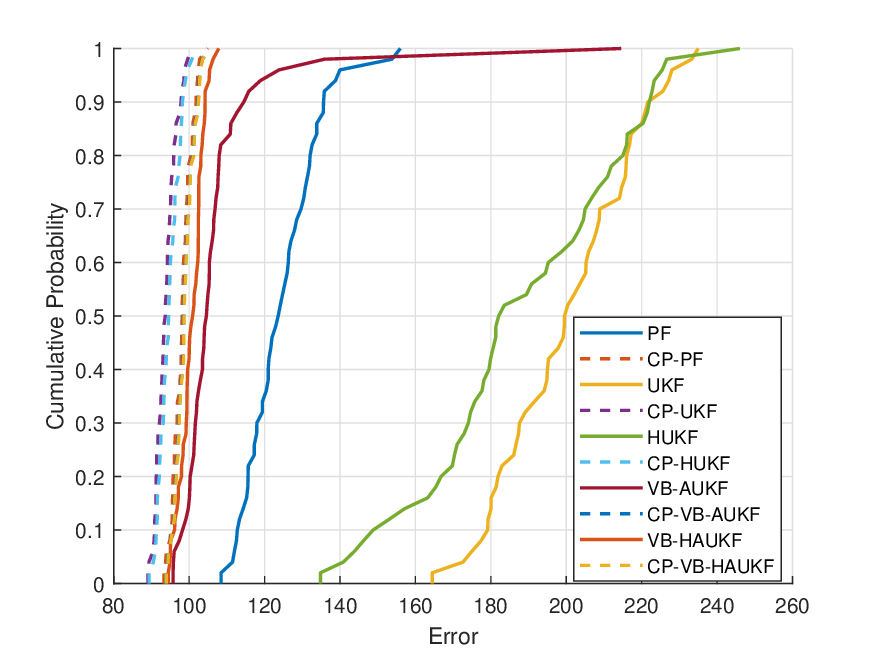}%
    \label{fig6c}%
}
\subfloat[]{%
    \includegraphics[width=0.5\linewidth]{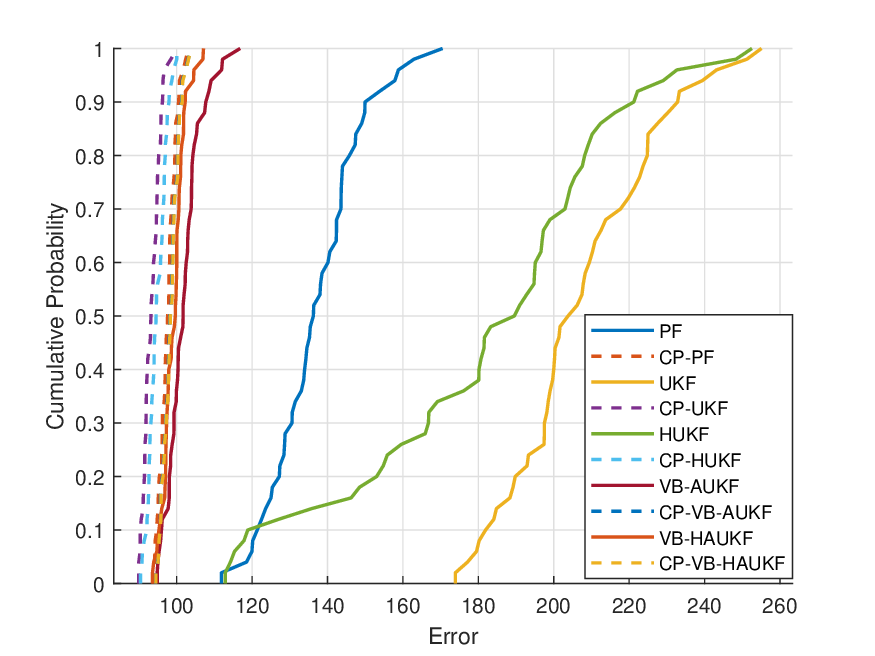}%
    \label{fig6d}%
}
\caption{The CDF curves of different filters for the four different types of measurement noise. (a) Case a. (b) Case b. (c) Case c. (d) Case d.}
\label{fig6}
\end{figure}

\subsection{Implementation Experiment}
In order to evaluate the actual positioning performance of the method proposed in this paper, we designed an indoor positioning and navigation system and conducted experiments on the third and fifth floors of the experimental building at Southwest University. Figure~\ref{fig7} shows the experimental scene and provides the plane profile of the experimental site. The experimental grid was divided every 3 meters, and nodes were marked with location labels. Geomagnetic fingerprints were used to locate the nodes, and the detailed position coordinates were obtained with the assistance of inertial data. 

The basic procedure of the experiment includes the establishment of a fingerprint database, real-time RSSI data acquisition, data fusion prediction via filtering, fingerprint matching of unknown coordinates, and the output of the current position information.

The RSSI measurements were obtained using a high-precision 9-axis geomagnetic sensor in real-time. The experimental equipment consisted of the STM32F103 MCU, WHEELTEC H30 high-precision 9-axis IMU sensor (which includes a 3-axis accelerometer, 3-axis gyroscope, and 3-axis magnetometer), and the related software and hardware tools for geomagnetic matching algorithms, such as the USART HMI, Matlab(R2024a), and XCOM serial debugging tool. Additionally, portable computing devices were employed for real-time data processing and analysis.

The experimental results demonstrate that the accuracy of the indoor single-point fingerprint matching algorithm, when enhanced with the conformal outlier detection method proposed in this paper, improved from 81.25\% to 93.75\%. Furthermore, the positioning error was significantly reduced, from a range of 0.62–6.87 m to just 0.03–0.35 m.
\begin{figure}[H]
\centering
\includegraphics[width=0.45\textwidth]{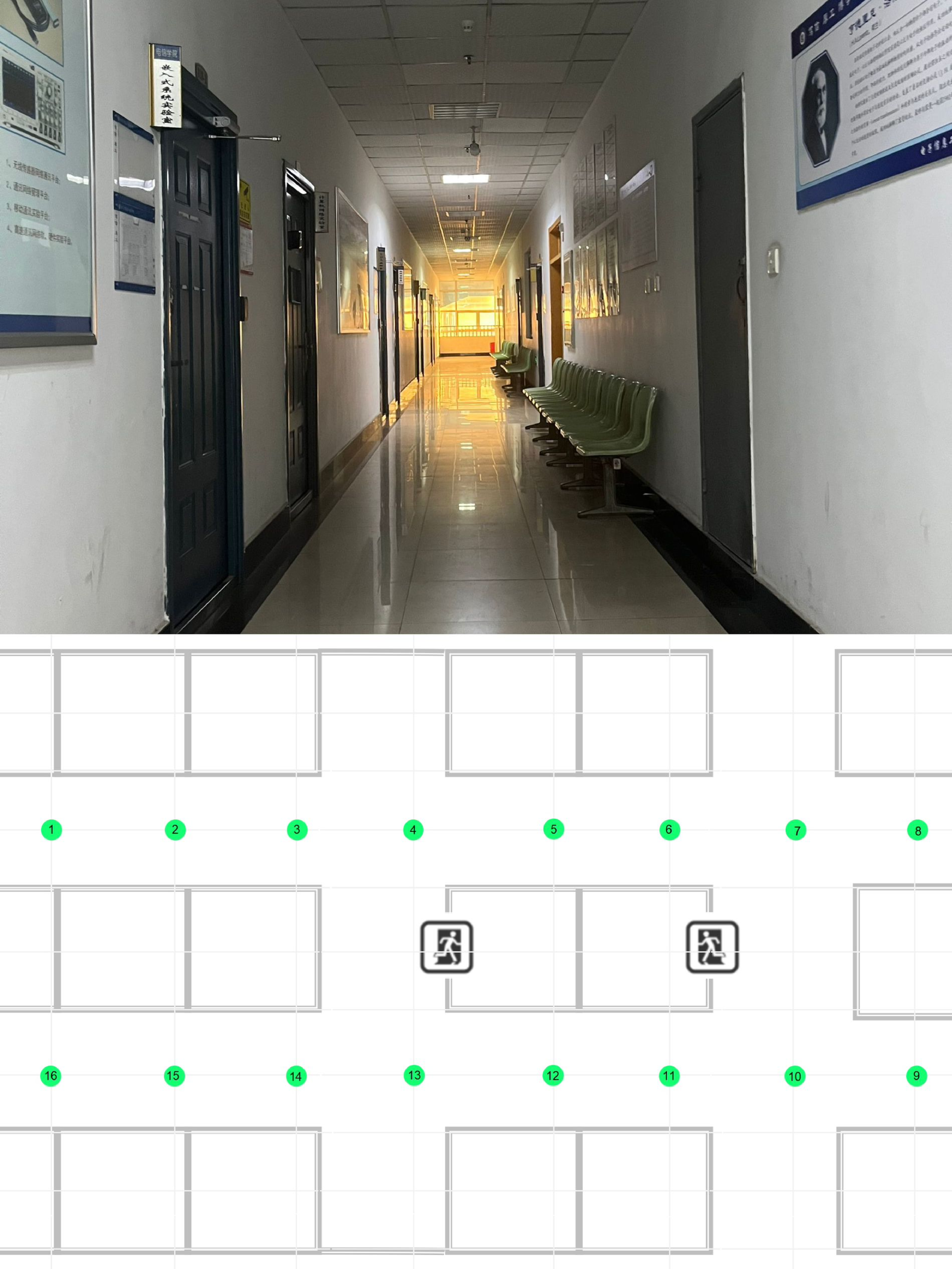}
\caption{Experimental scene.}
\label{fig7}
\end{figure}
\begin{figure}[H]
\centering
\includegraphics[width=0.5\textwidth]{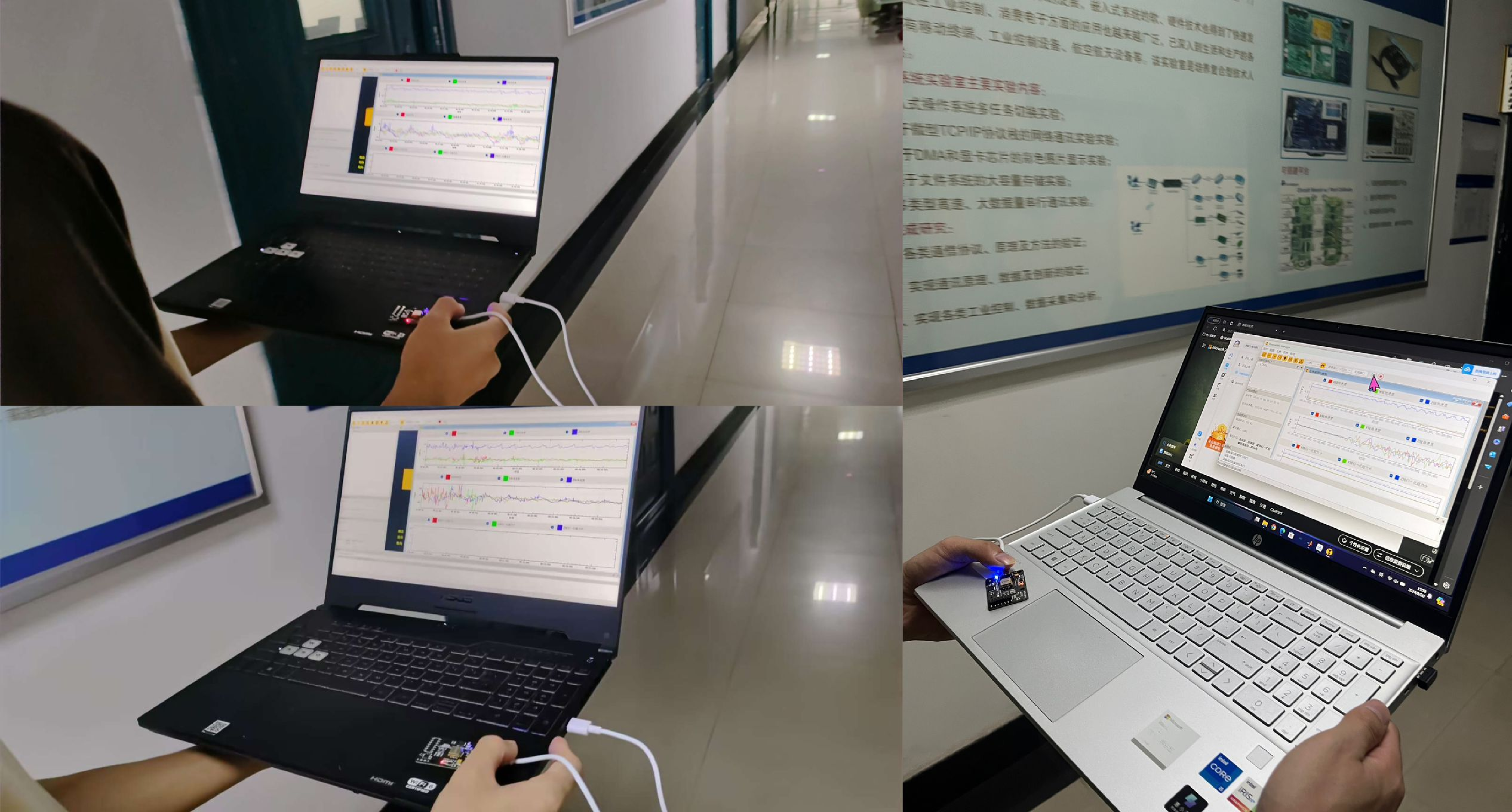}
\caption{The construction of fingerprint database and real-time RSSI acquisition.}
\label{fig8}
\end{figure}
\section{Conclusion}\label{C}
This paper presents a novel approach to improving indoor localization performance by integrating conformal outlier detection with a geomagnetic/inertial-based fingerprinting system. The proposed method addresses common challenges in indoor positioning by providing a robust and adaptive solution that is independent of noise distributions. Through extensive experimental validation, the framework demonstrated a significant improvement in positioning accuracy and reliability, with the fingerprint matching algorithm's accuracy increasing from 81.25\% to 93.75\% and positioning errors reduced from 0.62–6.87 m to 0.03–0.35 m.

The results highlight the effectiveness of conformal prediction in mitigating outlier influence and enhancing the robustness of indoor localization systems. The method's ability to adapt to dynamic environments and non-Gaussian noise makes it particularly well-suited for real-world IoT applications, including autonomous navigation, wearable devices, and smart buildings.


{\appendices
\section*{Proof: Semi-positive Definiteness Proof of Huber-KF Covariance}
\begin{proof}\label{app:huber_proof}
The proof establishes the positive semi-definiteness of the posterior covariance matrix through three steps:

\noindent\textbf{Step 1: Term Decomposition}\\
Decompose the posterior covariance into two terms:
\begin{equation}
P_{k|k} = \underbrace{(I - W_k K_k H)P_{k|k-1}(I - W_k K_k H)^T}_{T_1} + \underbrace{W_k K_k R K_k^T}_{T_2}
\end{equation}

\noindent\textbf{Step 2: Individual Semi-Definiteness Analysis}\\
\begin{itemize}
\item \textit{Term $T_1$:} 
\begin{align*}
T_1 &= A P_{k|k-1} A^T \quad \text{where } A = I - W_k K_k H \\
&\succeq 0 \quad \text{(as $P_{k|k-1} \succ 0$ and $A$ is real)}
\end{align*}

\item \textit{Term $T_2$:}
\begin{align*}
T_2 &= W_k K_k R K_k^T \\
&\succeq 0 \quad \text{(since $R \succ 0$ and $W_k \geq 0$)}
\end{align*}
\end{itemize}

\noindent\textbf{Step 3: Summation Property}\\
The summation preserves semi-definiteness:
\begin{equation}
P_{k|k} = T_1 + T_2 \succeq 0
\end{equation}

\noindent\textbf{Boundary Case Validation}:
\begin{itemize}
\item \textit{Case 1:} $W_k = 0$ (no measurement update):
\begin{equation}
P_{k|k} = P_{k|k-1} \succ 0
\end{equation}

\item \textit{Case 2:} $W_k > 0$:
Both $T_1$ and $T_2$ contribute positively to $P_{k|k} \succeq 0$.
\end{itemize}

\noindent Hence, $P_{k|k} \succeq 0$ holds for all valid $W_k \geq 0$.
\end{proof}

\section*{Proof: Coverage Property of Sliding Window Conformal Prediction}
\label{app:swcp_proof}

Let $\{(X_k, Y_k)\}_{k=1}^K$ be a time series data stream. At each time step $k > w$, we maintain a sliding calibration set $\mathcal{C}_k = \{s_{k-w}, \ldots, s_{k-1}\}$ where $w$ is the window size, and $s_i = s(X_i, Y_i)$ are non-conformity scores.

\subsection*{Modified Exchangeability Assumption}
Assume the scores in the extended window $\mathcal{C}_k \cup \{s_k\}$ are approximately exchangeable within any temporal block of size $w+1$. This weakens the i.i.d. assumption while preserving local stationarity.

\subsection*{Adaptive Quantile Construction}
Define the sliding window quantile estimate:
\[
\hat{q}_k = \inf\left\{ q : \frac{|\{s \in \mathcal{C}_k : s \leq q\}|}{w} \geq \frac{\lceil (w+1)(1-\alpha) \rceil}{w} \right\}
\]

\subsection*{Coverage Analysis}
\begin{theorem}
Under the block exchangeability assumption, the sliding window conformal predictor satisfies:
\[
\left| \mathbb{P}(s_k \leq \hat{q}_k) - (1-\alpha) \right| \leq \frac{C}{\sqrt{w}} 
\]
where $C$ depends on the mixing coefficient of the time series.
\end{theorem}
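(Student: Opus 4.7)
The plan is to combine the classical finite-sample coverage argument for split conformal prediction with a quantitative coupling argument that absorbs the departure from strict exchangeability into the mixing coefficient of the score sequence. The overall error will be decomposed into three pieces: (i) the usual $\mathcal{O}(1/w)$ discretization gap inherent in quantile-based CP, (ii) a mixing gap coming from the block-exchangeability assumption, and (iii) an empirical-quantile fluctuation term of order $1/\sqrt{w}$, which will dominate and set the final rate.

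First I would establish the baseline under exact exchangeability of the $w+1$ scores $\{s_{k-w}, \ldots, s_{k-1}, s_k\}$. In that ideal setting the rank of $s_k$ among the block is uniform on $\{1, \ldots, w+1\}$, and a standard counting argument yields
\[
\mathbb{P}^{\text{exch}}(s_k \leq \hat{q}_k) \;=\; \frac{\lceil (w+1)(1-\alpha) \rceil}{w+1} \;\in\; \left[1-\alpha,\; 1-\alpha + \tfrac{1}{w+1}\right].
\]
This is the same finite-sample guarantee used in the proof of the marginal coverage equation stated earlier in Section III.

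Second, I would convert the block-exchangeability assumption into a total-variation bound. Writing $\beta(w)$ for the mixing coefficient of the score process over the window, the joint law $\mathcal{L}(s_{k-w}, \ldots, s_k)$ can be coupled to an exchangeable surrogate law on the same block with $d_{\text{TV}} \leq C_1\, \beta(w)$. Applying the coupling inequality to the event $\{s_k \leq \hat{q}_k\}$ transfers the exchangeable bound to the true law at the cost of an additive $C_1 \beta(w)$. Third, because $\hat{q}_k$ is an empirical quantile over the sliding window, its deviation from the population quantile $q^*_{1-\alpha}$ must be controlled. I would invoke a DKW-type inequality, or a Bahadur representation, extended to $\beta$-mixing sequences, giving $|\hat{q}_k - q^*_{1-\alpha}| \leq C_2/\sqrt{w}$ with probability $1 - \mathcal{O}(1/\sqrt{w})$, where $C_2$ depends on the mixing constant and on the density of the scores near $q^*_{1-\alpha}$. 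Summing the three contributions and absorbing $1/(w+1)$ and $\beta(w)$ into a single constant yields $\bigl|\mathbb{P}(s_k \leq \hat{q}_k) - (1-\alpha)\bigr| \leq C/\sqrt{w}$.

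The hard part, I expect, will be Step 3: extracting the $1/\sqrt{w}$ quantile fluctuation rate for the weakly dependent score sequence. For i.i.d.\ calibration data this is a textbook DKW/Bahadur result, but here one needs either a mixing-adapted DKW inequality or a blocking argument that converts dependence into an effective sample size, at the price of an extra factor driven by the mixing rate. A subtler issue is the implicit dependence on $1/f(q^*_{1-\alpha})$, where $f$ is the density of the score distribution: the theorem hides this in $C$, so the proof must either verify that $f$ is bounded away from zero in a neighbourhood of the target quantile, or restate the theorem with an explicit density-regularity assumption. Balancing the mixing term $\beta(w)$ against $1/\sqrt{w}$ also implicitly requires $\beta(w) = \mathcal{O}(1/\sqrt{w})$, an assumption that should be made explicit in the hypotheses for the constant $C$ to be meaningful.
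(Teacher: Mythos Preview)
Your proposal is sound but follows a genuinely different route from the paper. The paper's own argument does not decompose the error at all: it introduces the centred indicators $D_k = \mathbb{I}\{s_k \leq \hat{q}_k\} - (1-\alpha)$, treats them as a (near-)martingale difference sequence, and applies an Azuma--Hoeffding inequality for weakly dependent data directly to the empirical fraction $\tfrac{1}{w}\sum_{s\in\mathcal{C}_k}\mathbb{I}\{s\leq\hat q_k\}$, reading off the $C/\sqrt{w}$ rate from the exponential tail with $\epsilon=C/\sqrt{w}$. There is no coupling step, no Bahadur/DKW control of the quantile itself, and no density condition at $q^*_{1-\alpha}$; the mixing coefficient enters only through the constant in the dependent Azuma bound. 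By contrast, your three-term decomposition (exchangeable baseline $+$ TV-coupling $+$ empirical-quantile fluctuation) is more structural: it targets the per-step probability directly, makes the role of the mixing coefficient explicit as a separate additive term, and surfaces the hidden regularity requirement $f(q^*_{1-\alpha})>0$ that the paper's concentration argument sidesteps. What the paper's approach buys is brevity and freedom from any density assumption; what yours buys is a transparent accounting of the constants and a cleaner separation between the dependence penalty and the finite-window fluctuation, at the price of the extra hypotheses you already flagged (bounded density near the target quantile and $\beta(w)=\mathcal O(1/\sqrt{w})$).
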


\begin{proof}
Define the martingale difference sequence:
\[
D_k = \mathbb{I}\{s_k \leq \hat{q}_k\} - (1-\alpha)
\]

The accumulated error satisfies:
\[
\left|\sum_{k=w+1}^K D_k\right| \leq C\sqrt{K\log(1/\delta)} \quad \text{w.p. } 1-\delta
\]

Normalizing by the window size:
\[
\frac{1}{K-w}\left|\sum_{k=w+1}^K D_k\right| \leq \frac{C}{\sqrt{w}} 
\]

Applying the Azuma-Hoeffding inequality for weakly dependent sequences \cite{barber2022predictive}:
\[
\mathbb{P}\left(\left|\frac{1}{w}\sum_{s\in\mathcal{C}_k} \mathbb{I}\{s \leq \hat{q}_k\} - (1-\alpha)\right| \geq \epsilon\right) \leq 2\exp\left(-\frac{w\epsilon^2}{2}\right)
\]

Setting $\epsilon = C/\sqrt{w}$ yields the stated bound.
\end{proof}

\subsection*{Practical Interpretation}
The coverage gap decays as $\mathcal{O}(1/\sqrt{w})$, implying:
\begin{itemize}
\item For $w=1000$, the coverage deviation is bounded by $\sim$3\%
\item Window size controls tradeoff between adaptivity and statistical guarantees
\item The bound holds for $\alpha$-mixing processes \cite{barber2022predictive}
\end{itemize}
}

\bibliographystyle{IEEEtran}
\bibliography{references}

\newpage

 




\vfill

\end{document}